\newcommand{\tbm}[1]{\tilde{\bm{#1}}}
\newcommand{\tsf}[1]{\tilde{\mathsf{#1}}}
\newcommand{\transpose}[1]{#1^{\mathsf{T}}}
\newcommand{\ctranspose}[1]{#1^{\mathsf{H}}}
\newcommand{\argmin}{\mathop{\mathrm{arg~min}}\limits}
\newtheorem{Lem}{Lemma}
\def\ps@IEEEtitlepagestyle{%
    \def\@oddfoot{\mycopyrightnotice}%
    \def\@evenfoot{}%
}
\def\mycopyrightnotice{%
    {\footnotesize Preprint Manuscript of 2017 IEEE International Workshop on Computational Advances in Multi-Sensor Adaptive Processing (CAMSAP 2017) \textcopyright2017 IEEE\hfill}
    \gdef\mycopyrightnotice{}
}
\begin{document}
\bstctlcite{IEEEexample:BSTcontrol}

%
\title{Independent Low-Rank Matrix Analysis Based on Parametric Majorization-Equalization Algorithm}



%
\author{\IEEEauthorblockN{Yoshiki Mitsui\IEEEauthorrefmark{1},
Daichi Kitamura\IEEEauthorrefmark{1},
Norihiro Takamune\IEEEauthorrefmark{1}, \\
Hiroshi Saruwatari\IEEEauthorrefmark{1},
Yu Takahashi\IEEEauthorrefmark{2},
and Kazunobu Kondo\IEEEauthorrefmark{2}}
\IEEEauthorblockA{\IEEEauthorrefmark{1}The University of Tokyo, Tokyo, Japan}
\IEEEauthorblockA{\IEEEauthorrefmark{2}Yamaha Corporation, Shizuoka, Japan}}

\maketitle

%
\IEEEpeerreviewmaketitle

\begin{abstract}
    In this paper, we propose a new optimization method for independent low-rank matrix analysis (ILRMA)
    based on a parametric majorization-equalization algorithm. ILRMA is an efficient blind source
    separation technique that simultaneously estimates a spatial demixing matrix (spatial model)
    and the power spectrograms of each estimated source (source model). In ILRMA, since both models are
    alternately optimized by iterative update rules, the difference in the convergence speeds between
    these models often results in a poor local solution. To solve this problem, we introduce a new parameter
    that controls the convergence speed of the source model and find the best balance between the optimizations
    in the spatial and source models for ILRMA.
\end{abstract}


\section{Introduction}
Audio source separation is a technique for estimating individual audio sources from a mixture signal.
In the last decade, nonnegative matrix factorization (NMF)~\cite{Lee1999,Lee2001,Kitamura2016IWAENC}
has been utilized to solve the audio source separation problem for single-channel signals.
In particular,
Itakura--Saito NMF (ISNMF) has a special
interpretation of a generative model that justifies the decomposition of power spectrograms~\cite{Fevotte2009},
and it has been extended
to a multichannel format,
which is known as multichannel NMF (MNMF)~\cite{Ozerov2010,Kameoka2010,Sawada2013,Mitsufuji2016}.
In a fully blind situation, blind source separation (BSS) has been well investigated, where the separated
signals are estimated without knowing any prior infomation about the mixing conditions or the characteristics of
the sources.
In the BSS problem, methods based on statistical independence, such as
frequency-domain independent component analysis (ICA)~\cite{Smaragdis1998,Sawada2004,Saruwatari2006} and
independent vector analysis (IVA)~\cite{Kim2007}, are widely used.

In these works, parameters are often estimated by an auxiliary function technique,
which is a computer-intensive method for iteratively solving various optimization problems
with a full guarantee of the monotonic convergence.
This includes the \emph{majorization-minimization (MM) algorithm}~\cite{Hunter2000}
that is a generalized scheme of a well-known EM algorithm,
and its improved \emph{majorization-equalization (ME) algorithm}~\cite{Fevotte2011}.
For example, MM- and ME-algorithm-based multiplicative update rules have been derived for ISNMF,
and it has been reported that the optimization based on the ME algorithm results in faster convergence
than that based on the MM algorithm~\cite{Fevotte2011,Nakano2010,Cao1999}.
For the BSS methods, the MM algorithm has been utilized to derive fast and stable update rules of
ICA~\cite{Ono2010} and IVA (AuxIVA)~\cite{Ono2011}.

Recently, a unified algorithm of IVA and ISNMF for BSS,
\emph{independent low-rank matrix analysis (ILRMA)}~\cite{Kitamura2016,Mitsui2017}, has been proposed.
In this method, parameters of a spatial demixing matrix (spatial model) and NMF variables that represent
the power spectrograms of each source (source model) are alternately optimized in the separate MM algorithms.
From a mathematical viewpoint, this scheme is classified into the coordinate descent method
and needs special care for control of each separated optimization because the ILRMA's problem is non-convex.
It is empirically known that the convergence speed of the spatial model is much higher than that of the
source model, and this difference may result in unstable source separation
owing to the existence of local minima in the cost function.
Therefore, the convergence speeds of these models should be controlled by introducing a new parameter
to find the best balance between the optimizations in the spatial and source models.

In this paper, we propose a new optimization method for ILRMA that is based on the parametric ME algorithm
with the additional parameter that controls the convergence speed of the source model.
Our new findings on the parametric surrogate function for
an arbitrary power term enable us to construct various types of
auxiliary functions that can be solved with a closed form in the ME algorithm step.
It would be of great interest in BSS that the proposed method is a generalization
of the conventional optimization algorithms used in ILRMA and NMFs;
indeed our algorithm includes conventional ILRMA as a special case.
Also, the optimal balance between the convergence speeds is experimentally confirmed by BSS experiments using
music signals.

\section{Conventional Methods}
\label{sec:conventional}

\subsection{Blind Source Separation in Frequency Domain}
Let $N$ and $M$ be the numbers of sources and microphones, respectively.
The complex-valued short-time Fourier transform (STFT) coefficients
of source signals, observed signals, and separated signals are defined as
$\bm{s}_{ij} = \transpose{(s_{ij, 1},\ldots, s_{ij, n}, \ldots, s_{ij, N})}$,
$\bm{x}_{ij} = \transpose{(x_{ij, 1},\ldots, x_{ij, m}, \ldots, x_{ij, M})}$,
and $\bm{y}_{ij} = \transpose{(y_{ij, 1},\ldots, y_{ij, n}, \ldots, y_{ij, N})}$,
where $i=1,\ldots,I$; $j=1,\ldots,J$; $n=1,\ldots,N$; and $m=1,\ldots,M$ are
the integral indexes of the frequency bins, time frames, sources, and channels, respectively,
and $\transpose{}$ denotes a transpose.
When the window length in the STFT is sufficiently longer than
the impulse responses between sources and microphones,
the following instantaneous mixture model in a frequency domain holds:
\begin{align}
  \bm{x}_{ij} = \bm{A}_{i}\bm{s}_{ij}, \label{eq:rank1}
\end{align}
where $\bm{A}_{i}$ is the mixing matrix.
This assumption of the mixing system is often called a rank-1 spatial model~\cite{Duong2010}.
If the number of sources equals the number of channels ($M=N$),
the demixing matrix $\bm{W}_{i} = \ctranspose{(\bm{w}_{i, 1}, \ldots, \bm{w}_{i, N})} = \bm{A}_{i}^{-1}$
can be defined, where $\ctranspose{}$ denotes a Hermitian transpose, and the separated signals are represented as
\begin{align}
    \bm{y}_{ij} = \bm{W}_{i} \bm{x}_{ij}.
\end{align}
The BSS problem is to estimate $\bm{W}_i$ using only $\bm{x}_{ij}$.

\subsection{ILRMA}
ILRMA is a method unifying IVA and ISNMF, which allows us to simultaneously model
the statistical independence between sources and the sourcewise time-frequency structure.
Whereas IVA employs sourcewise frequency vectors~\cite{Kim2007},
ILRMA estimates sourcewise power spectrograms that are approximately modeled by the NMF variables.
The demixing matrix $\bm{W}_i$ and the separated signal $\bm{y}_{ij}$ are optimized so that
the spectrogram of each source tends to be a low-rank matrix.
This separation mechanism of ILRMA is shown in Fig.~\ref{fig:ILRMA},
where $\bm{T}_{n} \in \mathbb{R}_{\geq 0}^{I{\times}L}$ and
$\bm{V}_{n} \in \mathbb{R}_{\geq 0}^{L{\times}J}$ are the basis and activation matrices for the $n$th estimated source,
respectively, and $l = 1,\ldots,L$ is the integral index of the bases.
$\bm{W}_i$, $\bm{T}_n$, and $\bm{V}_n$ can consistently be estimated in a fully blind manner.
Note that ILRMA is theoretically equivalent to conventional MNMF only when the rank-1 spatial model \eqref{eq:rank1}
is assumed, which yields a more stable and more efficient algorithm.

The cost function in ILRMA is defined as follows:
\begin{align}
  \mathcal{J}(\mathsf{W},\mathsf{T},\mathsf{V}) &= \sum_{i, j, n}  \left[
  \frac{| \ctranspose{\bm{w}_{i,n}}\bm{x}_{ij} |^2}{\sum_l t_{il, n}v_{lj, n}}
  +  \log \sum_l t_{il, n}v_{lj, n} \right]   \nonumber \\
    & \quad - 2J \sum_{i} \log |\det \bm{W}_i|, \label{eq:ilrma1}
\end{align}
where $\mathsf{W}$, $\mathsf{T}$, and $\mathsf{V}$ represent the sets of
$\bm{W}_i$, $\bm{T}_n$, and $\bm{V}_n$ for all $i$ and $n$,
and $t_{il,n}$ and $v_{lj,n}$ are the nonnegative elements of $\bm{T}_{n}$ and $\bm{V}_{n}$, respectively.
The rank-$L$ matrix $\bm{T}_{n}\bm{V}_{n}$ corresponds to the NMF decomposition
and represents an estimated power spectrogram of the $n$th source.

The first and third terms in \eqref{eq:ilrma1} are
equivalent to the cost function in IVA, which evaluates the independence between sources,
and the first and second terms in \eqref{eq:ilrma1}  are equivalent to the cost function in simple ISNMF.
Efficient update rules for optimizing $\mathcal{J}(\mathsf{W},\mathsf{T},\mathsf{V})$
based on the MM algorithm were proposed in \cite{Kitamura2016}.

\begin{figure}[!t]
\centering
\includegraphics[width=0.95\hsize]{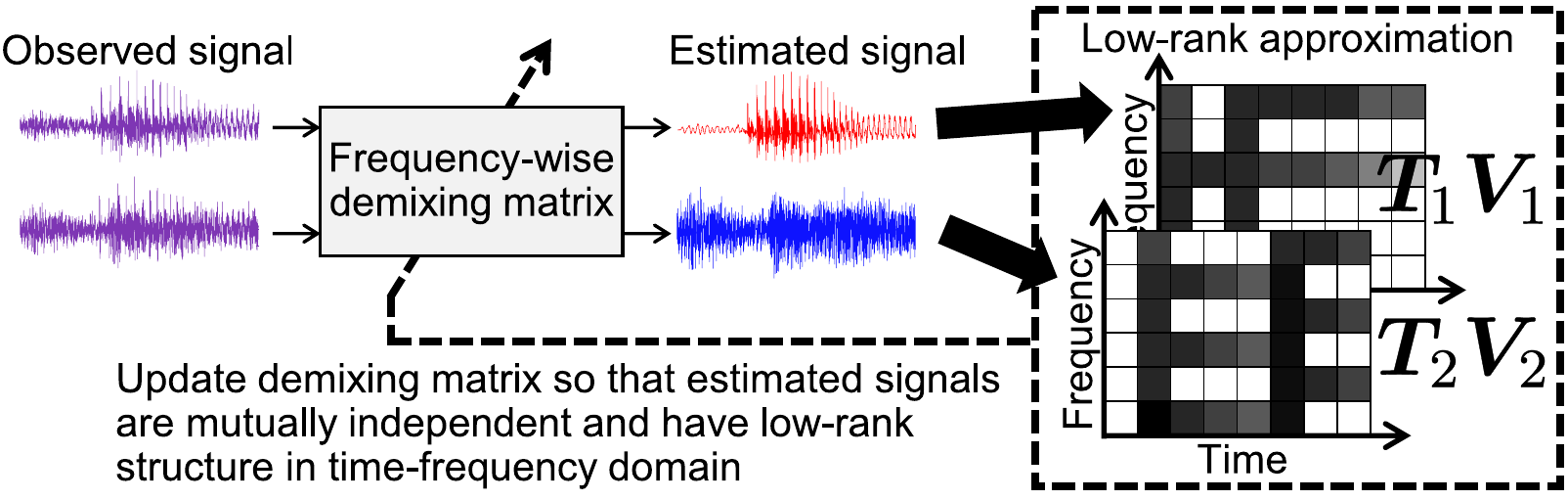}
\vspace{-1mm}
\caption{Conceptual model of ILRMA.}
\label{fig:ILRMA}
\vspace{-2mm}
\end{figure}

\section{Proposed Method}

\subsection{Motivation}
In ILRMA, we alternately update the demixing matrix $\bm{W}_i$ and the NMF variables $\bm{T}_n$ and $\bm{V}_n$.
It is empirically known that the convergence speed of the demixing matrix is much higher than that of the
NMF variables, and this difference may result in unstable separation owing to the existence of local minima.
To avoid convergence to poor solutions, we introduce a new parameter $p \in (0,1]$
that controls the convergence speed of the NMF variables.
Also, we experimentally investigate the best balance between the convergence speeds
of the spatial and source models in ILRMA to achieve more accurate separation.

\subsection{Auxiliary-Function-Based Iterative Optimization}
In this subsection, we briefly introduce the mechanism in optimization based on the auxiliary function technique,
which includes the MM and ME algorithms.
We consider an optimization problem to minimize a cost function $Q(\bm{\theta})$
that is difficult to directly minimize, where $\bm{\theta}$ belongs to an arbitrary parameter space $S$.
In such a case, we can use an auxiliary-function-based iterative optimization approach.
In this method, we first define an auxiliary function of $Q(\bm{\theta})$ as $Q^{+}(\bm{\theta}|\tbm{\theta})$ that satisfies
\begin{itemize}
    \item $Q(\bm{\theta}) = Q^{+}(\bm{\theta}|\bm{\theta})$ for all $\bm{\theta} \in S$,
    \item $Q(\bm{\theta}) \leq Q^{+}(\bm{\theta}|\tbm{\theta})$ for all $\bm{\theta}, \tbm{\theta} \in S$,
\end{itemize}
where $\tbm{\theta}$ is called the auxiliary variable for $\bm{\theta}$.
In the MM algorithm, we replace the minimization of $Q(\bm{\theta})$
with the minimization of the auxiliary function $Q^{+}(\bm{\theta}|\tbm{\theta})$.
The parameter $\bm{\theta}$ is updated as
\begin{align}
    \bm{\theta}^{(c+1)} = \argmin_{\bm{\theta}' \in S} Q^{+}(\bm{\theta}'|\bm{\theta}^{(c)}),
\end{align}
where $\bm{\theta}^{(c)}$ is the parameter at the $c$th iteration.
On the other hand, in the ME algorithm,
we find a parameter $\bm{\theta}^{(c+1)}$ that satisfies
\begin{align}
    Q^{+}(\bm{\theta}^{(c+1)}|\bm{\theta}^{(c)}) =  Q^{+}(\bm{\theta}^{(c)}|\bm{\theta}^{(c)}). \label{eq:MEkimo}
\end{align}
Fig.~\ref{fig:auxImage} shows geometric interpretations of the MM and ME algorithms.
Both the MM and ME algorithms guarantee the monotonic nonincrease in $Q(\bm{\theta})$.

\begin{figure}[!t]
\centering
\includegraphics[width=0.80\columnwidth]{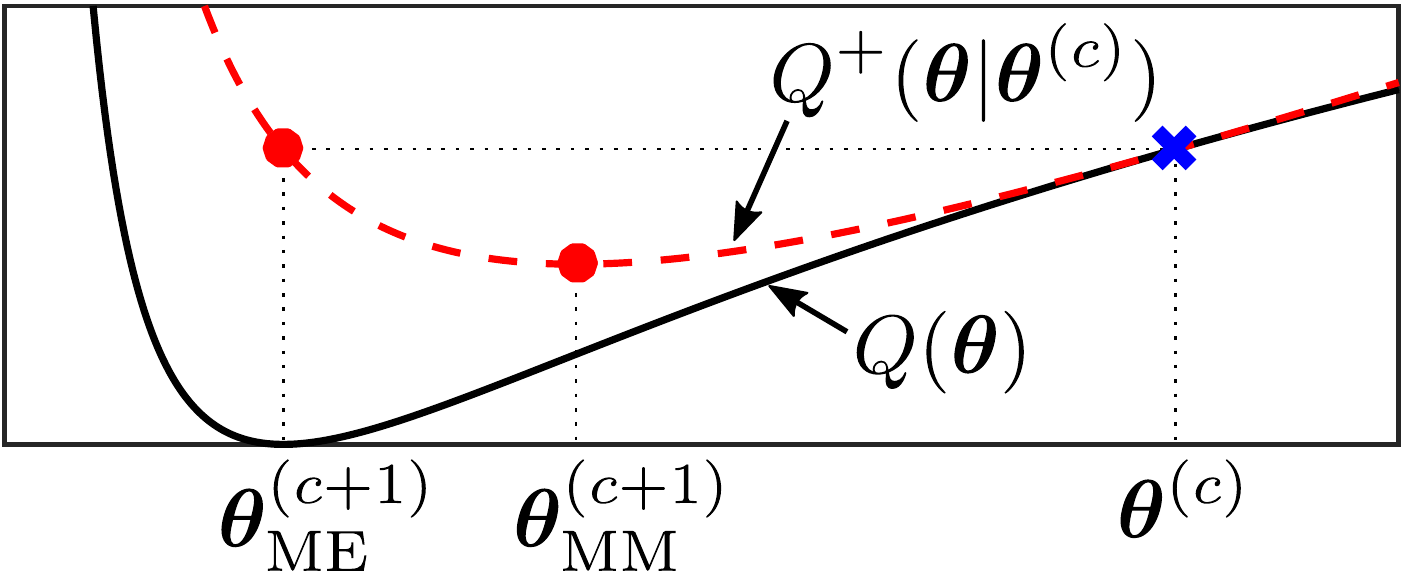}
\vspace{-1mm}
\caption{Geometric interpretations of MM and ME algorithms, where $\bm{\theta}_{\mathrm{MM}}^{(c+1)}$ and $\bm{\theta}_{\mathrm{ME}}^{(c+1)}$ are parameters in $(c+1)$th iteration with MM and ME algorithms,
respectively.}
\label{fig:auxImage}
\vspace{-2mm}
\end{figure}

\subsection{Derivation of New Parametric Auxiliary Function}
To control the convergence speed of the NMF variables in ILRMA, we design a new parametric auxiliary function
with the parameter $p$.

First, we give a brief proof of the following lemma.
\begin{Lem} 
For all $0 < q_1 < q_2$ or $q_2 < q_1 < 0$,
\begin{align}
    z^{q_1} \leq \frac{q_1}{q_2} z^{q_2} \xi^{{q_1}-{q_2}} + \frac{{q_2}-{q_1}}{{q_2}} \xi^{{q_1}} \label{eq:shisu}
\end{align}
holds, where $z>0$ and $\xi>0$.
The equality in \eqref{eq:shisu} holds if and only if $z =\xi$.
\end{Lem}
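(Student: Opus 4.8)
The plan is to reduce the two-variable inequality \eqref{eq:shisu} to the standard weighted arithmetic--geometric mean (AM--GM) inequality in a single variable. First I would divide both sides of \eqref{eq:shisu} by $\xi^{q_1}>0$ and set $t = z/\xi > 0$; since $\frac{q_1}{q_2} + \frac{q_2-q_1}{q_2} = 1$, the claim becomes
\begin{align}
  t^{q_1} \le \frac{q_1}{q_2}\, t^{q_2} + \frac{q_2-q_1}{q_2}, \label{eq:plan-reduced}
\end{align}
with equality if and only if $t=1$ (equivalently $z=\xi$).

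Next I would put $r = q_1/q_2$ and $u = t^{q_2}$. The key observation is that in \emph{both} admissible regimes --- $0<q_1<q_2$ and $q_2<q_1<0$ --- one has $r\in(0,1)$ (in the negative case because $|q_1|<|q_2|$), the two coefficients in \eqref{eq:plan-reduced} are exactly $r$ and $1-r$, and the map $t\mapsto t^{q_2}$ is a bijection of $(0,\infty)$ onto itself, so $u=1$ if and only if $t=1$. Using $t^{q_1} = (t^{q_2})^{q_1/q_2} = u^{r}$, inequality \eqref{eq:plan-reduced} becomes
\begin{align}
  u^{r} \le r\,u + (1-r), \qquad u>0,\ r\in(0,1), \label{eq:plan-young}
\end{align}
with equality if and only if $u=1$.

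Finally, \eqref{eq:plan-young} is precisely the weighted AM--GM inequality $u^{r}\,1^{1-r} \le r u + (1-r)\cdot 1$ (equivalently, Young's inequality, or the statement that the concave function $u\mapsto u^{r}$ lies below its tangent line at $u=1$), whose equality case is $u=1$. Unwinding the substitutions $u=t^{q_2}$, $t=z/\xi$ and multiplying back by $\xi^{q_1}$ recovers \eqref{eq:shisu} together with the stated equality condition.

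I do not expect a genuine obstacle here; the only point requiring care is the sign and range bookkeeping in the negative-exponent case $q_2<q_1<0$ --- checking that $r=q_1/q_2$ still lies in $(0,1)$, that $\frac{q_2-q_1}{q_2}>0$, and that the (now decreasing) substitution $t\mapsto t^{q_2}$ preserves the equality condition $u=1\Leftrightarrow t=1$ --- after which both cases are handled uniformly by \eqref{eq:plan-young}.
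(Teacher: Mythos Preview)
Your proposal is correct and is essentially the same argument as the paper's: both reduce to the concavity of $u\mapsto u^{q_1/q_2}$ with $q_1/q_2\in(0,1)$ and invoke the tangent-line (equivalently, weighted AM--GM) inequality, the only cosmetic difference being that you first normalize by $\xi^{q_1}$ to take the tangent at $u=1$, whereas the paper keeps the tangent at the general point $\eta=\xi^{q_2}$ and then substitutes $u=z^{q_2}$.
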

\begin{proof}
$u^{q_1/q_2}$ is a concave function for $u>0$ because $0 < q_1/q_2 < 1$. From the tangent-line inequality, we obtain
\begin{align}
    u^{\frac{q_1}{q_2}} &\leq \frac{q_1}{q_2} \eta^{\frac{q_1}{q_2}-1} (u-\eta) + \eta^{\frac{q_1}{q_2}}, \label{eq:tan}
\end{align}
where $\eta > 0$, and the equality holds if and only if $u = \eta$.
By substituting $u = z^{q_2}$ and $\eta = \xi^{q_2}$ and simplifying \eqref{eq:tan}, we have \eqref{eq:shisu}.
The equality in \eqref{eq:shisu} holds if and only if  $z = \xi$.
\end{proof}

Fig.~\ref{fig:aux} shows a geometric interpretation of \eqref{eq:shisu}.
When $q_2 = 1$, \eqref{eq:shisu} is equivalent to the tangent-line inequality, which is commomly used for designing auxiliary functions of concave terms.
In contrast, owing to Lemma 1, we can find various types of upper-bound function
except for the case of $q_2 = 1$, such as for $q_2 = 8$, $4$, $2$, and $0.7$.
This property implies the possibility of constructing different types of auxiliary function,
as described below.

\begin{figure}[!t]
\centering
\includegraphics[width=0.80\columnwidth]{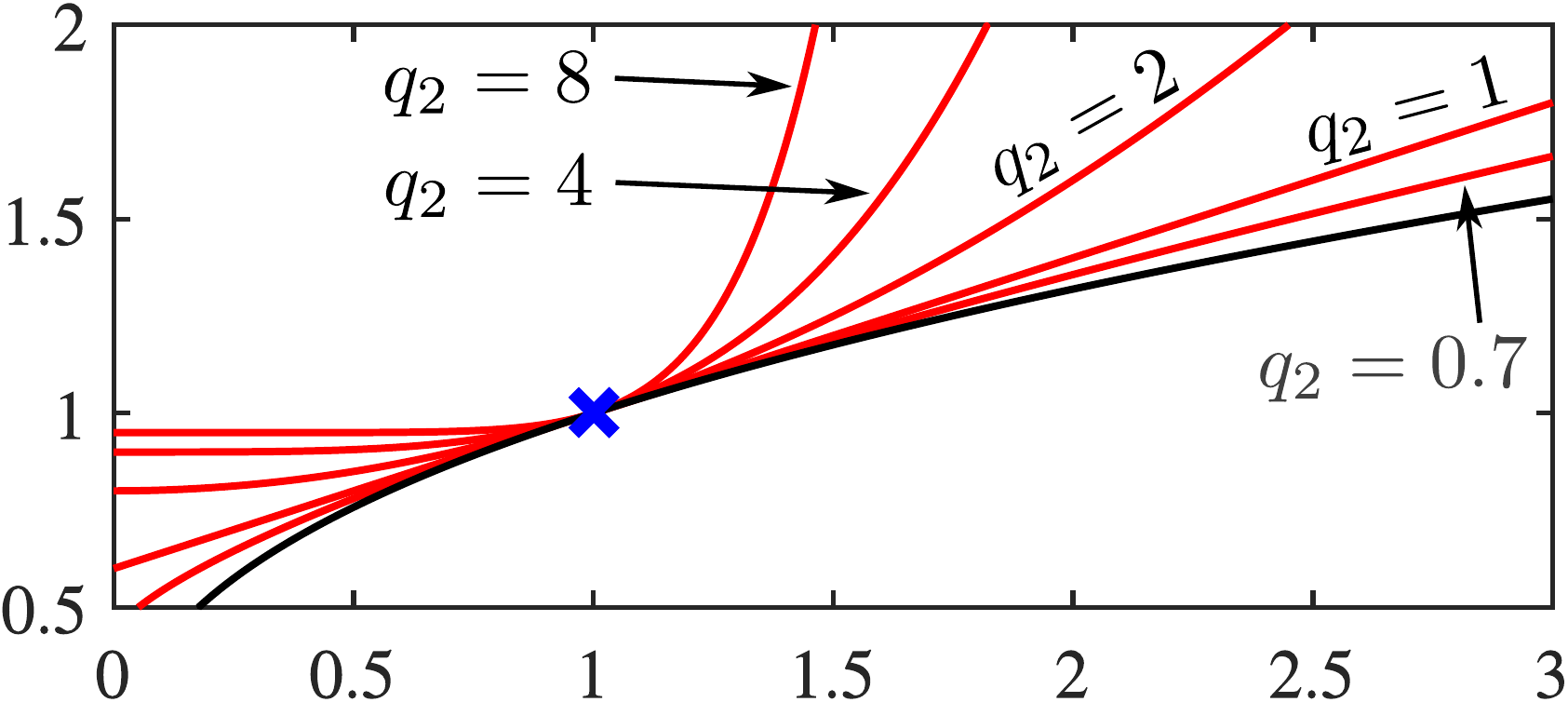}
\caption{Geometric interpretation of \eqref{eq:shisu}, where $q_1 = 0.4$ and $\xi = 1$.
Left- and right-hand sides of \eqref{eq:shisu} are depicked by black and red lines, respectively.}
\vspace{-1mm}
\label{fig:aux}
\vspace{-2mm}
\end{figure}

Next, we design the auxiliary function of \eqref{eq:ilrma1}.
The first term in \eqref{eq:ilrma1} is a convex function and the second term is a concave function for
$t_{il,n}v_{lj,n}$.
Applying Jensen's inequality to the first term using an auxiliary variable $\alpha_{ijl,n} \geq 0$ that
satisfies $\sum_l \alpha_{ijl,n} = 1$, we have
\begin{align}
    \frac{1}{\sum_{l} t_{il,n}v_{lj,n}} \leq \sum_{l} \frac{\alpha_{ijl,n}^{2}}{t_{il,n}v_{lj,n}}. \label{eq:jensen}
\end{align}
Also, by applying the tangent-line inequality to the second term using an auxiliary variable $\beta_{ij,n} > 0$, we have
\begin{align}
    \log \sum_l t_{il,n} v_{lj,n} \leq \frac{1}{\beta_{ij,n}} \left( \sum_l t_{il,n} v_{lj,n} - \beta_{ij,n} \right)
    + \log \beta_{ij,n}. \label{eq:tangent}
\end{align}
The equalities in \eqref{eq:jensen} and \eqref{eq:tangent} hold if and only if
\begin{align}
    \alpha_{ijl,n} &= \frac{t_{il,n}v_{lj,n}}{\sum_{l'} t_{il',n} v_{l'j,n}}, \label{eq:aux1} \\
    \beta_{ij,n} &= \sum_{l} t_{il,n} v_{lj,n}.  \label{eq:aux2}
\end{align}
From \eqref{eq:ilrma1} and \eqref{eq:jensen}--\eqref{eq:aux2}, we have the following auxiliary function:
\begin{align}
     \mathcal{J}^{+}(\mathsf{W},\mathsf{T},\mathsf{V} | \tsf{T}, \tsf{V}) &= \sum_{i, j, l, n}
    \left[
      \frac{ |\ctranspose{\bm{w}_{i,n}}\bm{x}_{ij} |^2 \tilde{t}_{il,n} \tilde{v}_{lj,n} }{ (\sum_{l'} \tilde{t}_{il',n}\tilde{v}_{l'j,n})^2 } \cdot
       \frac{\tilde{t}_{il,n} \tilde{v}_{lj,n}}{t_{il,n}v_{lj,n}}
    \right. \nonumber \\
    & \quad \left. {} +  \frac{\tilde{t}_{il,n}\tilde{v}_{lj,n}}{\sum_{l'} \tilde{t}_{il',n}\tilde{v}_{l'j,n}}
    \cdot  \frac{t_{il,n}v_{lj,n}}{\tilde{t}_{il,n} \tilde{v}_{lj,n}}
    \right] + \mathcal{C}, \label{eq:ilrma1-aux}
\end{align}
where we denote the auxiliary variables using $\tilde{\cdot}$ to distinguish them from the original variables,
and $\mathcal{C}$ is a term independent of $\mathsf{T}$ and $\mathsf{V}$.
This auxiliary function was used in~\cite{Kitamura2016} to derive the update rules for $\mathsf{T}$ and $\mathsf{V}$
based on the MM algirithm.
In this paper, we design a further auxiliary function parameterized by $p$
to control the convergence speed of the update rules.
By applying \eqref{eq:shisu} to $(t_{il,n}v_{lj,n})^{-1}$ and $t_{il,n}v_{lj,n}$ in the right-hand side of
\eqref{eq:ilrma1-aux} using auxiliary variables $\gamma_{ijl,n} > 0$ and $\delta_{ijl,n} > 0$, we have
\begin{align}
    (t_{il,n}v_{lj,n})^{-1} &\leq p  \gamma_{ijl,n}^{-1+\frac{1}{p}} \left( t_{il,n}v_{lj,n} \right)^{-\frac{1}{p}} + (1 - p) \gamma_{ijl,n}^{-1}, \label{eq:shisu2} \\
    t_{il,n}v_{lj,n} &\leq p  \delta_{ijl,n}^{1-\frac{1}{p}} \left( t_{il,n}v_{lj,n} \right)^{\frac{1}{p}} + (1 - p) \delta_{ijl,n}. \label{eq:shisu3}
\end{align}
The equalities in \eqref{eq:shisu2} and \eqref{eq:shisu3} hold if and only if
\begin{align}
    \gamma_{ijl,n} &= \delta_{ijl,n} = t_{il,n} v_{lj,n}. \label{eq:aux3}
\end{align}
From \eqref{eq:ilrma1-aux}--\eqref{eq:aux3}, we have
\begin{align}
    & \mathcal{J}_p^{++}(\mathsf{W},\mathsf{T},\mathsf{V} | \tsf{T}, \tsf{V}) \nonumber \\
    &= p  \sum_{i, j, l, n}  \left[  \frac{ |\ctranspose{\bm{w}_{i,n}}\bm{x}_{ij} |^2 \tilde{t}_{il,n} \tilde{v}_{lj,n}}{(\sum_{l'} \tilde{t}_{il',n}\tilde{v}_{l'j,n})^2 } \cdot \left( \frac{\tilde{t}_{il,n} \tilde{v}_{lj,n}}{t_{il,n}v_{lj,n}}   \right)^{\frac{1}{p}} \right. \nonumber \\
    & \quad \left. {} +  \frac{\tilde{t}_{il,n}\tilde{v}_{lj,n}}{\sum_{l'} \tilde{t}_{il',n}\tilde{v}_{l'j,n}}
    \left( \frac{t_{il,n}v_{lj,n}}{\tilde{t}_{il,n} \tilde{v}_{lj,n}}   \right)^{\frac{1}{p}}
    \right] + \mathcal{C}', \label{eq:ilrma1-aux2}
\end{align}
where $\mathcal{C}'$ is a term independent of $\mathsf{T}$ and $\mathsf{V}$.
The inequality among \eqref{eq:ilrma1}, \eqref{eq:ilrma1-aux}, and \eqref{eq:ilrma1-aux2} can be represented as
\begin{align}
  &\mathcal{J}(\mathsf{W},\mathsf{T},\mathsf{V}) \leq
  \mathcal{J}^{+} (\mathsf{W},\mathsf{T},\mathsf{V}|\tsf{T},\tsf{V})
   \leq \mathcal{J}_p^{++} (\mathsf{W},\mathsf{T},\mathsf{V}|\tsf{T},\tsf{V}), \label{eq:kekka}
\end{align}
where the equality in \eqref{eq:kekka} holds if and only if $\tsf{T} = \mathsf{T}$ and $\tsf{V} = \mathsf{V}$.
Therefore, when $p$ approaches 1, $\mathcal{J}_p^{++}$ equals $\mathcal{J}^{+}$.
On the other hand, as $p$ approaches 0, $\mathcal{J}_p^{++}$ diverges from
the original cost function $\mathcal{J}$, which leads to slower update rules for the NMF variables
as described in the next section.

\subsection{Derivation of New Update Rules for ILRMA}
The original cost function \eqref{eq:ilrma1} can be minimized by employing the parametric auxiliary function
\eqref{eq:ilrma1-aux2} based on the ME algorithm to derive the update rules for $\mathsf{T}$ and $\mathsf{V}$.
For the variables in $\mathsf{T}$, we have to solve the following equation:
\begin{align}
    \mathcal{J}_p^{++} (\mathsf{W},\mathsf{T},\tsf{V}|\tsf{T},\tsf{V}) =
    \mathcal{J}_p^{++} (\mathsf{W},\tsf{T},\tsf{V}|\tsf{T},\tsf{V}). \label{eq:me}
\end{align}
By taking \eqref{eq:me} apart and organizing the terms of $t_{il,n}^{-\frac{1}{p}}$ and $t_{il,n}^{\frac{1}{p}}$, we obtain
\begin{align}
    A \left( \frac{\tilde{t}_{il,n}}{t_{il,n}}  \right)^{\frac{1}{p}} +
    B \left( \frac{t_{il,n}}{\tilde{t}_{il,n} }  \right)^{\frac{1}{p}}
    + \mathcal{C}''
    = A + B + \mathcal{C}'', \label{eq:niji}
\end{align}
where $A = \sum_{j} |\ctranspose{\bm{w}_{i,n}}\bm{x}_{ij} |^2  \tilde{v}_{lj,n}
(\sum_{l'} \tilde{t}_{il',n}\tilde{v}_{l'j,n})^{-2}$,
$B = \sum_{j} \tilde{v}_{lj,n} (\sum_{l'} \tilde{t}_{il',n}\tilde{v}_{l'j,n})^{-1}$,
and $\mathcal{C}''$ is a term independent of $t_{il,n}$.
Equation \eqref{eq:niji} is a quadratic equation in the variable $\left( t_{il,n} / \tilde{t}_{il,n}  \right)^{\frac{1}{p}}$,
and we can factorize \eqref{eq:niji} as
\begin{align}
    \left[  \left( \frac{t_{il,n}}{\tilde{t}_{il,n} }  \right)^{\frac{1}{p}} - \frac{A}{B} \right]
    \left[  \left( \frac{t_{il,n}}{\tilde{t}_{il,n} }  \right)^{\frac{1}{p}} - 1 \right] \label{eq:niji-fac}
    = 0.
\end{align}
Finally, we can obtain the following update rule for $t_{il,n}$ as the nontrivial solution of \eqref{eq:niji-fac},
i.e., $t_{ij,n} = \tilde{t}_{ij,n} (A/B)^p$:
\begin{align}
    t_{il,n} = \tilde{t}_{il,n}  \left[ \frac{\sum_j |\ctranspose{\bm{w}_{i,n}}\bm{x}_{ij} |^2  \tilde{v}_{lj,n} (\sum_{l'} \tilde{t}_{il',n}\tilde{v}_{l'j,n})^{-2} }{ \sum_j \tilde{v}_{lj,n} (\sum_{l'} \tilde{t}_{il',n}\tilde{v}_{l'j,n})^{-1}} \right]^{p}. \label{eq:up_t}
\end{align}
Similar to \eqref{eq:up_t}, we can derive the update rule for $v_{lj,n}$ as
\begin{align}
    v_{lj,n} &= \tilde{v}_{lj,n}  \left[ \frac{\sum_i |\ctranspose{\bm{w}_{i,n}}\bm{x}_{ij} |^2  \tilde{t}_{il,n} (\sum_{l'} \tilde{t}_{il',n}\tilde{v}_{l'j,n})^{-2} }{ \sum_i \tilde{t}_{il,n} (\sum_{l'} \tilde{t}_{il',n}\tilde{v}_{l'j,n})^{-1}} \right]^{p} . \label{eq:up_v}
\end{align}
These update rules coincide with those in conventional ILRMA when $p = 0.5$.
By controlling this parameter in the range of $0 < p \leq 1$, we can adjust the convergence speed
of the NMF variables $\bm{T}_n$ and $\bm{V}_n$.
Note that these update rules guarantee the monotonic nonincrease in $\mathcal{J}$
for any value of $p$ in the range $0 < p \leq 1$.
The similar parameterization of the MM algorithm can be derived.
However, it just provides the same generalization in the range of $0<p\leq0.5$.
Regarding the demixing matrix $\bm{W}_i$, we can use the same update rule as that in conventional ILRMA
called iterative projection, which was originally proposed in AuxIVA~\cite{Ono2011}.

\section{Experiment}

\subsection{Experimental Setup}

\begin{table}[!t]
\centering
\caption{Music sources}
\vspace{-2mm}
\label{tbl:sources}
\begin{tabular}{@{}ccc@{}}
\toprule
ID & Song name & Instruments \\ \midrule
1 & dev2\_\_ultimate\_nz\_tour\_\_snip\_43\_61           & guitar/synth                  \\
2 & dev2\_\_another\_dreamer-the\_ones\_we\_love\_\_snip\_69\_94 & guitar/vocals \\
 \bottomrule
\end{tabular}
\end{table}

\begin{table}[!t]
\centering
\caption{Experimental conditions}
\vspace{-2mm}
\label{tbl:condition}
\begin{tabular}{@{}cc@{}}
\toprule
Sampling frequency             & 16000~Hz                     \\
FFT length                     & 256~ms (4096~samples) \\
Window shift                   & 128~ms (2048~samples)  \\
Number of bases $L$        & \{10, 20, 40, 60\}         \\
Number of ILRMA iterations & 200                              \\
Initial values of $\bm{T}_n, \bm{V}_n$ &  Uniform random values in $(0,1)$ \\
Initial value of $\bm{W}_i$ &  $N \times N$ identity matrix    \\
 \bottomrule
\end{tabular}
\vspace{-4mm}
\end{table}

To find the best balance between the convergence speeds,
we conducted BSS experiments with two sources and two microphones.
In this experiment, the observed signals were convoluted with the music sources shown in Table~\ref{tbl:sources}
obtained from the SiSEC2010 professionally produced music recordings dataset~\cite{Araki2010}
and the impulse response (E2A)~\cite{Nakamura2000} used in~\cite{Kitamura2016}.
Also, we used uniformly produced random values of $p \in (0,1]$ in the proposed parametric update rules
\eqref{eq:up_t} and \eqref{eq:up_v}.
As the evaluation criterion, we used the signal-to-distortion ratio (SDR)~\cite{Vincent2006},
which indicates the overall separation quality.
We performed ILRMA with the proposed parametric update rules 2000 times using different random seeds
in each experiment.
The other conditions are shown in Table~\ref{tbl:condition}.

\subsection{Results}

\begin{figure}[!t]
\centering
\subfloat[\label{fig:rand1}]{\includegraphics[clip, width=0.83\columnwidth]{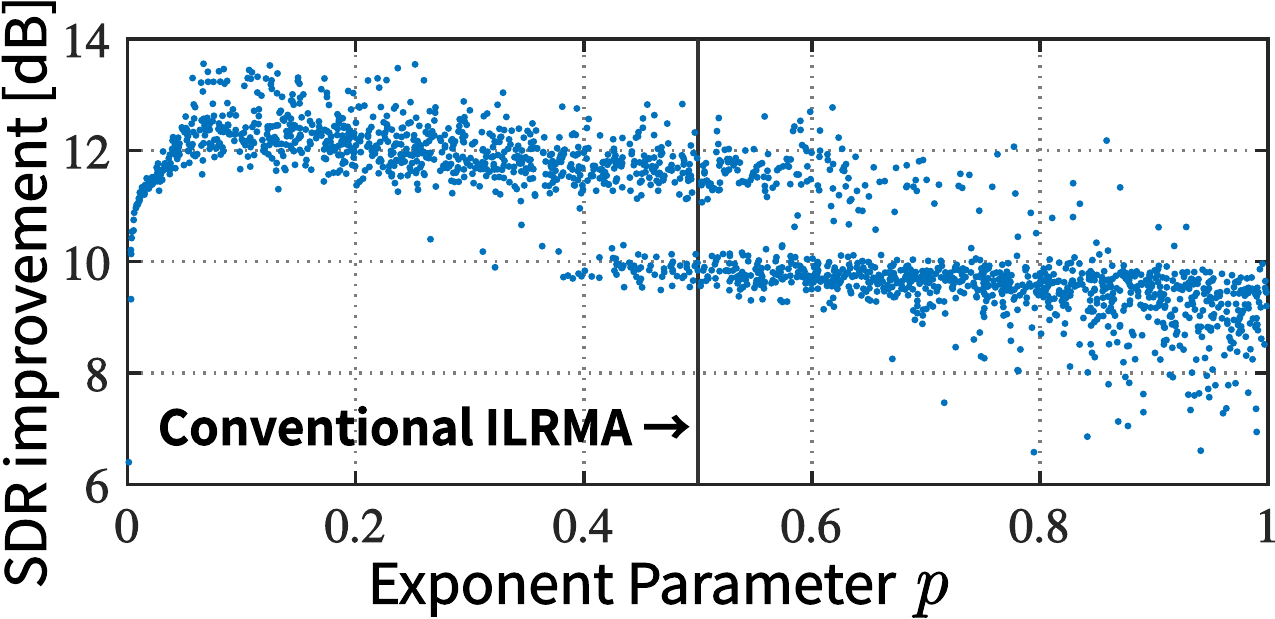}}
\quad
\subfloat[\label{fig:rand2}]{\includegraphics[clip, width=0.83\columnwidth]{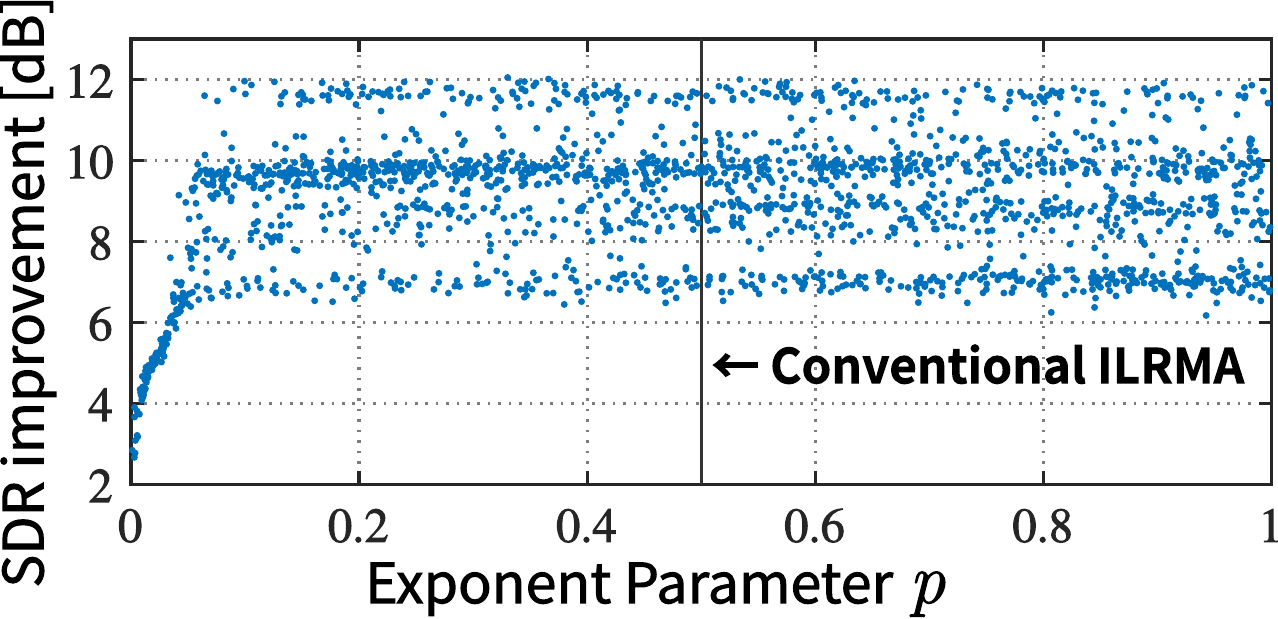}}
\vspace{-1.5mm}
\caption{Scatter plots of SDR improvements of the guitar source in (a) song ID1 and (b) song ID2, where $L = 40$.}
\label{fig:rand}
\vspace{-2mm}
\end{figure}

\begin{figure}[!t]
\centering
\subfloat[ID1]{\includegraphics[clip, width=0.48\columnwidth]{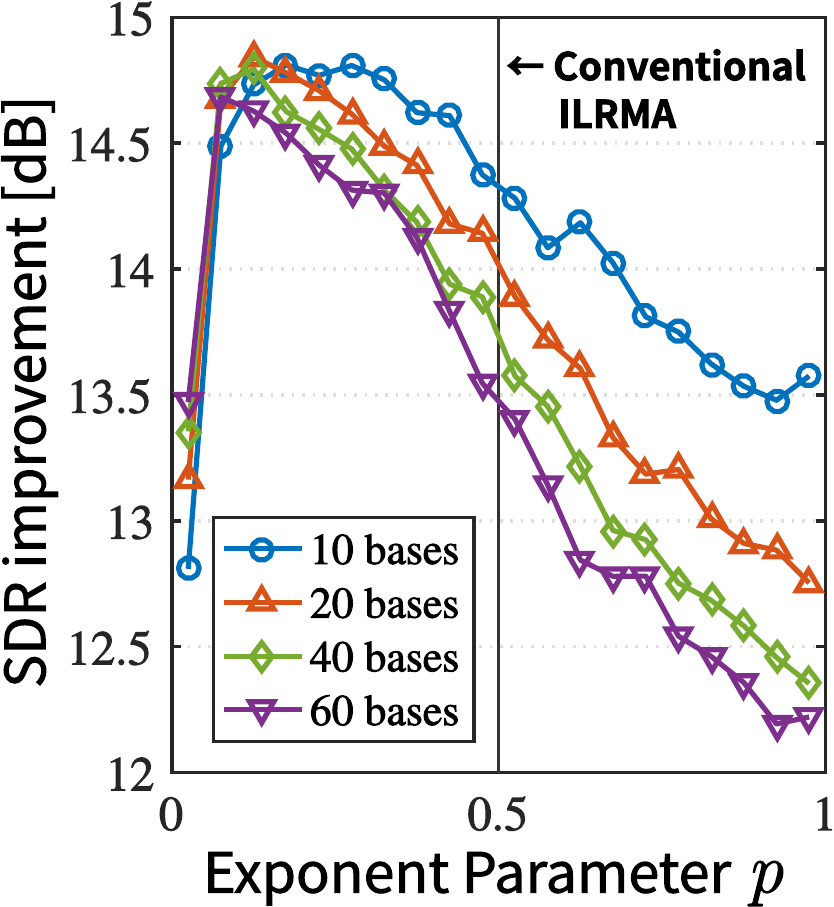}}
\quad
\subfloat[ID2]{\includegraphics[clip, width=0.48\columnwidth]{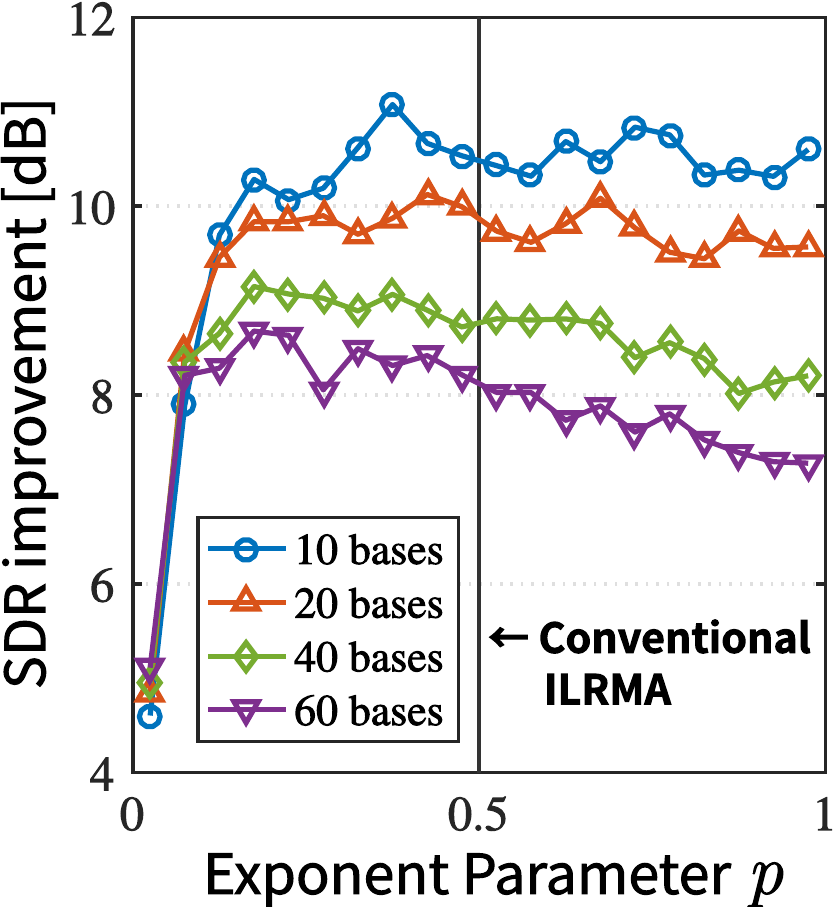}}
\vspace{-1.5mm}
\caption{Average SDR improvements of (a) song ID1 and (b) song ID2.}
\label{fig:sound}
\vspace{-3mm}
\end{figure}

Fig.~\ref{fig:rand} shows the scatter plot of SDR improvements for various values of  $p$.
From Fig.~\ref{fig:rand}~\subref{fig:rand1}, we can confirm that slower update rules for the NMF variables
(smaller values of $p$) provide better separation performance than faster ones (larger values of $p$).
This tendency did not clearly appear in Fig.~\ref{fig:rand}~\subref{fig:rand2},
but we can still recognize that the cluster in $p \in (0, 0.5)$ is concentrated around higher SDR values.
Fig.~\ref{fig:sound} shows the average SDR improvements calculated from the 2000 separation results.
When we set $L$ to a larger value, the performance of ILRMA with $p \in (0.5, 1)$ is degraded.
However, by using a smaller vallue of $p$,
ILRMA can achieve better separation even if too many bases are used in the source model,
thus avoiding a poor solution.

\section{Conclusion}
In this paper, we proposed a new optimization method of ILRMA based on the parametric ME algorithm
to control the optimization balance between spatial and source models, where a monotonic nonincrease in
the cost function is always ensured.
The experimental results showed that ILRMA with a smaller value of $p$ tends to provide better separation
than conventional ILRMA, especially in the case of a large number of bases.

\section*{Acknowledgments}
This work was partly supported by the SECOM Science and Technology Foundation and JSPS KAKENHI Grant Numbers JP17H06101 and JP17H06572.

\IEEEtriggeratref{13}


\bibliographystyle{IEEEtran}
\bibliography{bstctl,abbrev_middle,paper}

\begin{thebibliography}{10}
\providecommand{\url}[1]{#1}
\csname url@samestyle\endcsname
\providecommand{\newblock}{\relax}
\providecommand{\bibinfo}[2]{#2}
\providecommand{\BIBentrySTDinterwordspacing}{\spaceskip=0pt\relax}
\providecommand{\BIBentryALTinterwordstretchfactor}{4}
\providecommand{\BIBentryALTinterwordspacing}{\spaceskip=\fontdimen2\font plus
\BIBentryALTinterwordstretchfactor\fontdimen3\font minus
  \fontdimen4\font\relax}
\providecommand{\BIBforeignlanguage}[2]{{%
\expandafter\ifx\csname l@#1\endcsname\relax
\typeout{** WARNING: IEEEtran.bst: No hyphenation pattern has been}%
\typeout{** loaded for the language `#1'. Using the pattern for}%
\typeout{** the default language instead.}%
\else
\language=\csname l@#1\endcsname
\fi
#2}}
\providecommand{\BIBdecl}{\relax}
\BIBdecl
\renewcommand{\BIBentryALTinterwordstretchfactor}{4}

\bibitem{Lee1999}
D.~D. Lee and H.~S. Seung, ``{Learning the parts of objects by non-negative
  matrix factorization},'' \emph{Nature}, vol. 401, no. 6755, pp. 788--791,
  1999.

\bibitem{Lee2001}
D.~D. Lee and H.~S. Seung, ``Algorithms for non-negative matrix
  factorization,'' in \emph{Proc. Adv. Neural Inf. Process. Syst.}, 2001, pp.
  556--562.

\bibitem{Kitamura2016IWAENC}
D.~Kitamura, N.~Ono, H.~Saruwatari, Y.~Takahashi, and K.~Kondo,
  ``Discriminative and reconstructive basis training for audio source
  separation with semi-supervised nonnegative matrix factorization,'' in
  \emph{Proc. Int. Workshop Acoust. Signal Enhancement}, 2016, pp. 1--5.

\bibitem{Fevotte2009}
C.~F{\'{e}}votte, N.~Bertin, and J.-L. Durrieu, ``Nonnegative matrix
  factorization with the {Itakura--Saito} divergence: With application to music
  analysis,'' \emph{Neural Comput.}, vol.~21, no.~3, pp. 793--830, 2009.

\bibitem{Ozerov2010}
A.~Ozerov and C.~F{\'{e}}votte, ``Multichannel nonnegative matrix factorization
  in convolutive mixtures for audio source separation,'' \emph{{IEEE} Trans.
  Audio, Speech, Lang. Process.}, vol.~18, no.~3, pp. 550--563, 2010.

\bibitem{Kameoka2010}
H.~Kameoka, T.~Yoshioka, M.~Hamamura, J.~{Le Roux}, and K.~Kashino,
  ``Statistical model of speech signals based on composite autoregressive
  system with application to blind source separation,'' in \emph{Proc. Int.
  Conf. Latent Variable Anal. Signal Separation}, 2010, pp. 245--253.

\bibitem{Sawada2013}
H.~Sawada, H.~Kameoka, S.~Araki, and N.~Ueda, ``Multichannel extensions of
  non-negative matrix factorization with complex-valued data,'' \emph{{IEEE}
  Trans. Audio, Speech, Lang. Process.}, vol.~21, no.~5, pp. 971--982, 2013.

\bibitem{Mitsufuji2016}
Y.~Mitsufuji, S.~Koyama, and H.~Saruwatari, ``Multichannel blind source
  separation based on non-negative tensor factorization in wavenumber domain,''
  in \emph{Proc. {IEEE} Int. Conf. Acoust. Speech, Signal Process.}, 2016, pp.
  56--60.

\bibitem{Smaragdis1998}
P.~Smaragdis, ``{Blind separation of convolved mixtures in the frequency
  domain},'' \emph{Neurocomputing}, vol.~22, no.~1, pp. 21--34, 1998.

\bibitem{Sawada2004}
H.~Sawada, R.~Mukai, S.~Araki, and S.~Makino, ``A robust and precise method for
  solving the permutation problem of frequency-domain blind source
  separation,'' \emph{{IEEE} Trans. Speech Audio Process.}, vol.~12, no.~5, pp.
  530--538, 2004.

\bibitem{Saruwatari2006}
H.~Saruwatari, T.~Kawamura, T.~Nishikawa, A.~Lee, and K.~Shikano, ``Blind
  source separation based on a fast-convergence algorithm combining {ICA} and
  beamforming,'' \emph{{IEEE} Trans. Audio, Speech, Lang. Process.}, vol.~14,
  no.~2, pp. 666--678, 2006.

\bibitem{Kim2007}
T.~Kim, H.~T. Attias, S.-Y. Lee, and T.-W. Lee, ``Blind source separation
  exploiting higher-order frequency dependencies,'' \emph{{IEEE} Trans. Audio,
  Speech, Lang. Process.}, vol.~15, no.~1, pp. 70--79, 2007.

\bibitem{Hunter2000}
D.~R. Hunter and K.~Lange, ``Quantile regression via an {MM} algorithm,''
  \emph{J. Comput. Graph. Stat.}, vol.~9, no.~1, pp. 60--77, 2000.

\bibitem{Fevotte2011}
C.~F{\'{e}}votte and J.~Idier, ``Algorithms for nonnegative matrix
  factorization with the $\beta$-divergence,'' \emph{Neural Comput.}, vol.~23,
  no.~9, pp. 2421--2456, 2011.

\bibitem{Nakano2010}
M.~Nakano, H.~Kameoka, J.~{Le Roux}, Y.~Kitano, N.~Ono, and S.~Sagayama,
  ``Convergence-guaranteed multiplicative algorithms for nonnegative matrix
  factorization with {$\beta$}-divergence,'' in \emph{Proc. {IEEE} Int.
  Workshop Mach. Learn. Signal Process.}, 2010, pp. 283--288.

\bibitem{Cao1999}
Y.~Cao, P.~P.~B. Eggermont, and S.~Terebey, ``Cross {B}urg entropy maximization
  and its application to ringing suppression in image reconstruction,''
  \emph{{IEEE} Trans. Image Process.}, vol.~8, no.~2, pp. 286--292, 1999.

\bibitem{Ono2010}
N.~Ono and S.~Miyabe, ``Auxiliary-function-based independent component analysis
  for super-{G}aussian sources,'' in \emph{Proc. Int. Conf. Latent Variable
  Anal. Signal Separation}, 2010, pp. 165--172.

\bibitem{Ono2011}
N.~Ono, ``{Stable and fast update rules for independent vector analysis based
  on auxiliary function technique},'' in \emph{Proc. {IEEE} Workshop Appl.
  Signal Process. Audio Acoust.}, 2011, pp. 189--192.

\bibitem{Kitamura2016}
D.~Kitamura, N.~Ono, H.~Sawada, H.~Kameoka, and H.~Saruwatari, ``Determined
  blind source separation unifying independent vector analysis and nonnegative
  matrix factorization,'' \emph{{IEEE/ACM} Trans. Audio, Speech, Lang.
  Process.}, vol.~24, no.~9, pp. 1626--1641, 2016.

\bibitem{Mitsui2017}
Y.~Mitsui, D.~Kitamura, S.~Takamichi, N.~Ono, and H.~Saruwatari, ``Blind source
  separation based on independent low-rank matrix analysis with sparse
  regularization for time-series activity,'' in \emph{Proc. {IEEE} Int. Conf.
  Acoust. Speech, Signal Process.}, 2017, pp. 21--25.

\bibitem{Duong2010}
N.~Q.~K. Duong, E.~Vincent, and R.~Gribonval, ``Under-determined reverberant
  audio source separation using a full-rank spatial covariance model,''
  \emph{{IEEE} Trans. Audio, Speech, Lang. Process.}, vol.~18, no.~7, pp.
  1830--1840, 2010.

\bibitem{Araki2010}
S.~Araki, A.~Ozerov, V.~Gowreesunker, H.~Sawada, F.~Theis, G.~Nolte, D.~Lutter,
  and N.~Q.~K. Duong, ``The 2010 signal separation evaluation campaign
  {(SiSEC2010)}: Audio source separation,'' in \emph{Proc. Int. Conf. Latent
  Variable Anal. Signal Separation}, 2010, pp. 114--122.

\bibitem{Nakamura2000}
S.~Nakamura, K.~Hiyane, F.~Asano, T.~Nishiura, and T.~Yamada, ``Acoustical
  sound database in real environments for sound scene understanding and
  hands-free speech recognition,'' in \emph{Proc. Int. Conf. Lang. Resources
  Evaluation}, 2000, pp. 965--968.

\bibitem{Vincent2006}
E.~Vincent, R.~Gribonval, and C.~F{\'{e}}votte, ``Performance measurement in
  blind audio source separation,'' \emph{{IEEE} Trans. Audio, Speech, Lang.
  Process.}, vol.~14, no.~4, pp. 1462--1469, 2006.

\end{thebibliography}
%

%
%

\end{document}